\documentclass{article}

\usepackage{graphicx}
\usepackage{amsfonts}
\usepackage{amsmath}
\usepackage{amsthm}
\usepackage[ruled]{algorithm2e}

\def\F{\ensuremath{\mathbb{F}}}   
\def\S{\ensuremath{\mathcal{S}}}  
\def\A{\ensuremath{\mathcal{A}}}  
\def\H{\ensuremath{\mathcal{H}}} 
\def\R{\ensuremath{\rho}}              
\def\bound{\ensuremath{T}}           
\def\C{\ensuremath{\mathcal{C}}}  
\def\kpdual{\ensuremath{r'}}          
\def\puncSet{\ensuremath{\mathcal{P}}} 
\def\O{\ensuremath{\mathcal{O}}}          

\def\Y{\ensuremath{\mathcal{Y}}}

\def\E{\ensuremath{\mathcal{E}}}

\newcommand\vect[1]{\ensuremath{\textbf{#1}}}
\newcommand\parenthesis[1]{\ensuremath{\left( #1 \right)}}

\newcommand\ceil[1]{\ensuremath{\left\lceil #1 \right\rceil}}
\renewcommand\brace[1]{\ensuremath{\left\lbrace #1 \right\rbrace}}

\newtheorem{theorem}{Theorem}
\newtheorem{proposition}{Proposition}
\newtheorem{example}{Example}

\DeclareMathOperator{\Emb}{Emb}
\DeclareMathOperator{\Ext}{Ext}
\DeclareMathOperator{\dec}{dec}
\DeclareMathOperator{\cl}{cl}
\DeclareMathOperator{\BCH}{BCH}

\begin{document}

\title{Improving success probability and embedding efficiency in code
  based steganography}

\author{
  Morgan Barbier\thanks{University of Caen - GREYC
    \texttt{morgan.barbier@unicaen.fr}}
  \and
  Carlos Munuera\footnote{University of Valladolid - Department of
    Applied Mathematics \texttt{cmunuera@arq.uva.es}}
  \thanks{This work was
    supported by Junta de Castilla y Le\'on under grant VA065A07 and by
Spanish Ministry for Science and Technology under grants
MTM2007-66842-C02-01 and  MTM 2007-64704.}}
\sloppypar
\maketitle

\begin{abstract}
For stegoschemes arising from error correcting codes, embedding depends on a decoding
map for the corresponding code. As decoding maps are usually not
complete, embedding can fail.
We propose a method to ensure or increase the probability of embedding
success for these stegoschemes. This method is based on puncturing
codes.  We show how the use of punctured codes may also increase the
embedding efficiency of the obtained stegoschemes.
\end{abstract}


\section{Introduction}

Steganography is the art of transmitting information in secret, so
that even the existence of communication is hidden. It is realized by
embedding the messages to be protected into innocuous cover objects
such as digital images. In order to minimize the possibility of being
detected by third parties, the number of embedding changes in the
cover must be small enough. Consequently, to obtain a high payload,
steganographers should design stegosystems able to embed as much
information as possible per cover change. In other words, we seek for
embedding methods giving high embedding efficiency.\\

Given a cover image and a secret message, we first select the
placement and intensity of allowed embedding changes in the cover. This
is done by means of a {\em selection rule}. After this step the cover
is transformed into a finite sequence $x_1,\dots,x_n$ of symbols from
an alphabet $\mathcal{A}$ (usually $\mathcal{A}=\mathbb{F}_2$). We
refer to this sequence as the {\em cover vector}
$\vect{x}=(x_1,\dots,x_n)$. The secret message will be also a vector
$\vect{m}=(m_1,\dots,m_r)\in \mathcal{A}^r$. The algorithms used for
embedding the information $\vect{m}$ into
$\vect{x}$ and later recovering $\vect{m}$ from $\Emb(\vect{x},\vect{m})$)form the {\em
  stegoscheme} associated to the stegosystem.\\

Crandall first noted that error correcting codes can be used to
construct stegoschemes of high embedding efficiency. From this
discovering many stegoschemes have been proposed using different types
of codes. Relations between codes and stegoschemes will be treated in
more detail in Section \ref{codsteg}. In general, in code based
steganography (also called {\em matrix embedding}), given a code $\C$
(or a family of such codes, as explained below), the embedding is
realized by using a decoding map of $\C$. It turns out that every
decoding map of $\C$ provides a realization of it as a
stegoscheme. This nice idea encountered in practice a serious problem:
the absence of effective complete decoding methods. Indeed, for nearly
all currently known decoding algorithms, most errors are impossible to
decode, which means that when using these algorithms, the majority of
times the embedding process fails. In contrast, complete decoding
algorithms, for which this problem does not occur, are computationally
infeasible, so their practical applications are reduced  to either perfect
codes or codes of small length, which offer low embedding efficiency.\\

In this paper we propose a novel method to ensure embedding success, in
principle for using any code. It is based on puncturing the original
code as many times as necessary to get a new code whose covering
radius equals to the correction capability of the original one. As we
shall see, this method may also improve the embedding efficiency so
that it guaranties embedding success and may give high embedding
efficiency.\\

The organization of the paper is as follows. In Section \ref{codsteg}
we recall the connection between coding theory and steganography, as
well as we point the above mentioned drawback of stegoschemes obtained
by this method, concerning decoding maps. Our method is exposed in
detail in Section 3. Finally Section 4 deals to the parameters of the
new stegoschemes. Some numerical experiments concerning BCH codes are
reported.

\section{From coding theory to steganography}\label{codsteg}

\subsection{Stegoschemes and codes}

Let $\A$ be a finite alphabet with $q$ elements and let $n\ge r$ be
two positive integers. The purpose of a stegoscheme $\mathcal{S}$ is
to embed a message $\vect{m} \in \A^r$ into a cover vector $\vect{x}
\in \A^n$, making as few changes as possible in $\vect{x}$, and later
extract the information hidden in the modified vector. Formally, a
$(n,r)$-stegoscheme $\mathcal{S}$ is defined as a couple of
functions $\mathcal{S}=(\Emb,\Ext)$
$$
\begin{array}{rclcrcl}
  \Emb: \A^n \times \A^r & \longrightarrow & \A^n & \mbox{ and } &
  \Ext: \A^n & \longrightarrow & \A^r,\\
\end{array}
$$
such that $\Ext(\Emb(\vect{x},\vect{m})) = \vect{m}$, for all
$(\vect{x},\vect{m}) \in \A^n \times \A^r$.
This condition guarantees that we always retrieve the right message
$\vect{m}$ from the stego vector $\Emb(\vect{x},\vect{m})$. The
stegoscheme is called {\em proper} if
$d(\vect{x},\Emb(\vect{x},\vect{m}))\le d(\vect{x},\vect{v})$ for all
$\vect{v}$ such that $\Ext(\vect{v})=\vect{m}$, where $d$ stands for
the Hamming distance. This means that the number of embedding changes
is the minimum possible allowed by the extracting map.\\

The first stegoscheme based on coding theory was proposed by Crandall
in 1998, by  using the family of binary Hamming codes,
see~\cite{Crandall}. Let $m$ be a positive integer and  let $H$ be  a
parity check matrix of the binary Hamming code of length
$n=2^m-1$. The embedding and extracting functions are as follows
$$
\begin{array}{rcl}
 \Emb: \F_2^n \times \F_2^{m} & \longrightarrow & \F_2^{n}\\

(\vect{x},\vect{m}) & \longmapsto & \vect{x}- \cl(\vect{x}H^T-\vect{m}),\\
\end{array}
$$
$$
\begin{array}{rcl}
\Ext: \F_2^{n}  & \longrightarrow & \F_2^m\\
\vect{v} & \longmapsto &\vect{v}H^T,
\end{array}
$$
where $\cl(\vect{z})$ stands for a {\em coset leader} of $\vect{z}\in\F_2^m$.
That is an element of minimum Hamming weight among those
$\vect{v}\in\F_2^{n}$ verifying $\vect{v}H^t =\vect{z}$. The fact that
$\Ext(\Emb(\vect{x},\vect{m})) = \vect{m}$ is straightforward.\\

This stegoscheme is so efficient that Westfeld developed the
famous software F5 based on it~\cite{F5_algo}. The notion of {\em
efficiency}, which is the main subject of this paper, will be
detailed in Section~\ref{ssec:efficiency}. After Crandall's
discovery, other authors proposed  the same model of stegoscheme,
based on different types of codes:
BCH~\cite{syndromeCodingBCH,BCH_fastSyndrome},
Reed-Solomon~\cite{FontaineGaland}, trellis
codes~\cite{MinimizingImpact}, etc.\\

More generally, in~\cite{MunueraBarbier_Systematic} it is shown that a
$(n,r)$-stegoscheme is equivalent to a family
$\brace{(\C_\vect{m},\dec_\vect{m})}_{\vect{m}\in\A^r}$  where the
$\C_\vect{m}$'s  are nonempty disjoint codes in $\A^n$ (not necessarily linear),
$\dec_\vect{m}$ is a decoding map for $\C_\vect{m}$ and
$\cup\C_\vect{m}=\A^n$. Let us remember that a {\em decoding
  map} for a code $\C\subseteq\A^n$ is just a map $\dec:
\mathcal{X}\subseteq\A^n\longrightarrow \C$. If
$\mathcal{X}=\A^n$ then the decoding is said to be {\em
  complete} (as every received vector can be decoded). If $\dec$
verifies the additional property that
$d(\vect{x},\dec(\vect{x}))=d(\vect{x},\C) $ for all
$\vect{x}\in\mathcal{X}$, then $\dec$ is called {\em minimum
  distance decoding}. Given such a family, the corresponding
stegoscheme has embedding and extracting functions
$$
\Emb(\vect{x},\vect{m})=\dec_\vect{m}(\vect{x}) \; \mbox{ and } \;
\Ext( \vect{v})= \vect{m}    \mbox{ if $\vect{v}\in\C_{\vect{m}}$.}
$$
If $\dec$ is a minimum distance decoding, then the obtained
stegoscheme is proper. This method allows us to construct a large
amount of stegoschemes, to which we collectively refer as {\em
  code-based} stegoschemes. If  $\A$ is a field, $\A=\F_q$, and
$\C_\vect{0}$ is $[n,n-r]$ linear, it is natural to consider the
partition of
$\F_q^n$ given by the translates of $\C_\vect{0}$ (or equivalently,
the cosets of $\F_q^n/\C_\vect{0}$). Note that given a decoding map
$\dec_\vect{0}$ of $\C_\vect{0}$, the function
$\dec_{\vect{m}}(\vect{x})=\cl(\vect{m}) +
\dec_{\vect{0}}(\vect{x}-\cl(\vect{m}))$ is a decoding map for the
translate $\C_\vect{m}=\cl(\vect{m})+\C_\vect{0}$. By using the
systematic writing of $\C_{\vect{0}}$, say in the first $n-r$
positions, we can avoid the need of computing coset leaders in the embedding process
(unless they are required by
$\dec_{\vect{0}}$), as $\Emb(\vect{x},\vect{m}) =
(\vect{0},\vect{m})+\dec_{\vect{0}}(\vect{x}-(\vect{0},\vect{m}))$,
see~\cite{MunueraBarbier_Systematic}. The same idea can be applied if
$\C_\vect{0}$ is a group (nonlinear) code or a general systematic
code.\\

Therefore, given a linear code $\C$, each decoding map of $\C$
provides a {\em realization} of $\C$ as stegoscheme. Recall that there is a
universal decoding method for linear codes, the so-called {\em
  syndrome-leader decoding}, based on pre-computing all syndromes and
leaders. For example, Crandall's stegoscheme is obtained from Hamming
codes by using syndrome-leader decoding.

\subsection{Embedding Efficiency}
\label{ssec:efficiency}

The behavior of a stegoscheme, and subsequently the comparison of two
of them, is based on its parameters. Let $\S$ be a $(n,r)$-stegoscheme
which embeds a message of $\A^r$ in a vector of $\A^n$ with  $\bound$ 
modifications at most and $\tilde{\bound}$ modifications in average. The {\em
  relative payload, change rate} and {\em average change rate} of $\S$
are respectively defined as
  $$
  a = \frac{r}{n} \;  , \; 
  R=\frac{\bound}{n} \; \mbox{ and }   \; \tilde{R}=\frac{\tilde{\bound}}{n}.
  $$
The {\em embedding efficiency} and {\em average embedding efficiency}
of $\S$ are defined as 
$$
 e = \frac{r}{\bound} \mbox{ and }  \tilde{e} = \frac{r}{\tilde{\bound}}.
$$
Recall that when $\S$ is arising from a linear or systematic code
$\C$ (by the method above explained), then $r$ is the redundancy of $\C$, and $T, \tilde{T}$ are the
covering and average radii of $\C$, see for
example~\cite{MunueraBarbier_Systematic}.\\

Let $\S_1$ and $\S_2$ be two stegoschemes defined over the same 
alphabet $\A$. If they have the same relative payload, then both embed
as much information by using the same
quantity of cover-medium. Thus  $\S_1$ is better
than $\S_2$  if and only if  it produces less distortion in the cover,
that is, if and only if 
the embedding efficiency of $\S_1$ is greater than the embedding efficiency of
$\S_2$.  So we look for stegoschemes with the biggest embedding
efficiency for a fixed relative payload. There exists
an upper on the embedding efficiency for a fixed relative payload as follows.

\begin{theorem}
  Let $\S$ be a $q$-ary stegoscheme with relative payload $a$ and embedding
  efficiency $e$. Then
  $$
  e \le \frac{a}{\H_q^{-1}(a)},
  $$
  where $\H_q^{-1}$ is the inverse function of the $q$-ary entropy
  $\H_q(x) = x \log_q(q-1) - x \log_q(x) - (1-x) \log_q(1-x)$.
\end{theorem}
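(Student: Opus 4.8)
The plan is to connect the stegoscheme parameters to the covering radius of the associated code and then invoke the sphere-covering bound. Recall from the remark just above the theorem that a stegoscheme $\S$ with relative payload $a$ and embedding efficiency $e$ arises (up to equivalence) from a partition of $\F_q^n$ into translates of a code of redundancy $r$, where $r/n = a$ and the maximum number of embedding changes $\bound$ equals the covering radius $\R$ of that code; hence $e = r/\bound = r/\R$. Thus the inequality $e \le a/\H_q^{-1}(a)$ is equivalent to a lower bound on the covering radius, namely $\R \ge n\,\H_q^{-1}(r/n)$, and the whole theorem reduces to a purely coding-theoretic statement about how large the covering radius of a code with given redundancy must be.

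The key step is the sphere-covering bound. First I would observe that if $\R$ is the covering radius, then the balls of Hamming radius $\R$ centered at the $q^{n-r}$ codewords must cover all of $\F_q^n$. Counting the points in a Hamming ball of radius $\R$ gives the inequality
$$
q^{n-r} \sum_{i=0}^{\R} \binom{n}{i}(q-1)^i \ge q^n,
$$
equivalently $\sum_{i=0}^{\R}\binom{n}{i}(q-1)^i \ge q^r$. Next I would take logarithms and pass to the standard volume estimate for Hamming balls: for $\R \le (1-1/q)n$ one has
$$
\log_q\!\left( \sum_{i=0}^{\R}\binom{n}{i}(q-1)^i \right) \le n\,\H_q\!\left( \frac{\R}{n} \right).
$$
Combining the two displays yields $n\,\H_q(\R/n) \ge r$, i.e. $\H_q(\R/n) \ge r/n = a$. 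Since $\H_q$ is increasing on $[0,1-1/q]$, applying $\H_q^{-1}$ gives $\R/n \ge \H_q^{-1}(a)$, whence $\R \ge n\,\H_q^{-1}(a)$ and therefore $e = r/\R = na/\R \le a/\H_q^{-1}(a)$, as desired.

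The main obstacle is the volume estimate for the Hamming ball, i.e. justifying $\sum_{i=0}^{\R}\binom{n}{i}(q-1)^i \le q^{n\H_q(\R/n)}$. This is the one nontrivial analytic ingredient; it follows from the fact that plugging the distribution proportional to $\binom{n}{i}(q-1)^i$ into the entropy function, or more simply by the elementary argument that $q^{-n\H_q(\R/n)}\sum_{i=0}^{\R}\binom{n}{i}(q-1)^i \le \sum_{i=0}^{n}\binom{n}{i}(q-1)^i \big(\tfrac{\R/n}{1-\R/n}\cdot\tfrac{1}{q-1}\big)^{i-\R} q^{-n\H_q(\R/n)}\le 1$ when $\R/n \le 1-1/q$, using that the ratio factor is $\le 1$ for $i \le \R$. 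I would cite this as the standard entropy bound on volumes of Hamming balls rather than reprove it in full. A secondary point to handle cleanly is the boundary case $\R > (1-1/q)n$: there $\H_q^{-1}(a)$ may not be well defined on the relevant branch, but in that regime the bound is vacuous or trivial since $a \le 1$ forces $\H_q^{-1}(a) \le 1-1/q < \R/n$ already, so the inequality $e \le a/\H_q^{-1}(a)$ holds automatically.
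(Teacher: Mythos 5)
Your proof is correct in substance, and it is worth noting at the outset that the paper itself contains no argument for this theorem: it simply cites \cite[Ch.~12, p.~454]{book:WatermarkingSteganography}. So your sphere-covering derivation is a self-contained proof rather than a parallel of anything in the paper's text; it is indeed the standard route to this bound. Two points should be tightened. First, your opening reduction assumes that every stegoscheme arises from a partition of $\F_q^n$ into translates of a single code of redundancy $r$; the equivalence the paper invokes (from \cite{MunueraBarbier_Systematic}) only gives a partition of $\A^n$ into $q^r$ disjoint, not necessarily linear and not necessarily equal-sized, codes $\C_{\vect{m}}=\Ext^{-1}(\vect{m})$. The fix is immediate: since $\Emb(\vect{x},\vect{m})\in\C_{\vect{m}}$ and $d(\vect{x},\Emb(\vect{x},\vect{m}))\le\bound$ for every $\vect{x}$, each fiber $\C_{\vect{m}}$ has covering radius at most $\bound$, and since the $q^r$ fibers partition $\F_q^n$, at least one fiber has cardinality at most $q^{n-r}$; applying your sphere-covering count to that fiber yields exactly $\sum_{i=0}^{\bound}\binom{n}{i}(q-1)^i\ge q^r$, after which the entropy volume bound, the monotonicity of $\H_q$ on $[0,1-1/q]$, and your handling of the trivial regime $\bound>(1-1/q)n$ go through verbatim. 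Second, $\bound$ need not \emph{equal} the covering radius of the fiber (for a non-proper stegoscheme the maximal number of changes can exceed it); only the inequality $\bound\ge$ covering radius is guaranteed, but that is precisely the direction your argument uses, so nothing breaks. With these two adjustments your proof is complete, and it supplies a justification that the paper leaves entirely to an external reference.
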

See \cite[Ch. 12, page 454]{book:WatermarkingSteganography}.
The same result holds asymptotically for
the average embedding efficiency.

\subsection{Realizations of codes as stegoschemes}
\label{ssec:realizations}

Let $\C$ be a  $[n,k]$ linear code over $\A=\F_q$. As seen in the
previous section, each decoding map $\dec$ of  $\C$ provides a
realization of $\C$ as a $(n,r)$ stegoscheme, $\S_{\dec}$. All of them
are different although they share the same theoretical parameters:
$r=n-k$ is the redundancy of $\C$, ${R}$ is the covering radius of
$\C$ and  $\tilde{R}$ is the average radius of
$\C$. This method of making
stegoschemes has, in practice, a serious difficulty: the dramatic
absence of computationally efficient complete decoding maps for a
given code. As already mentioned above, there is a universal complete decoding method for linear
codes, namely syndrome-leader decoding. For binary codes we can also use
gradient decoding, based on minimal codewords.  However, these two
methods require the use (computation and storage) of tables, usually
of a very large size, so they have mainly a theoretical interest,
being their practical applications restricted to the case of perfect
codes or codes of small length $n$. Note that stegoschemes of high
efficiency have large
$n$. Algebraic decoding methods (not based on using tables) may be
computationally efficient, but they are far to be complete; usually they
decode up to the packing radius $t=\lfloor(d-1)/2\rfloor$, where $d$
is the minimum distance, or even less. For example, Berlekamp-Massey
algorithm for decoding for BCH codes, decodes errors of weight up to
one half of the designed minimum distance. For BCH two-error
correcting codes (which are quasi-perfect) a complete decoding method
is available in \cite{Hartmann} by using small tables and these codes
have been proposed
as good candidates for stegoschemes, see \cite{BCH_fastSyndrome}. For
other codes most vectors cannot be decoded. Consequently, most vectors
of $\A^n$ cannot be used as a cover vectors for embedding. Remark that
when using codes for error-correction, error patterns of low weight
are more probable while for embedding purposes all vectors are, in
principle, equally likely as covers.

\begin{example}\label{exBCH}
Let $\C=\BCH_m(3)$ be the primitive binary triple error correcting BCH
code of length $n=2^m-1$. Suppose first we want to decode $\C$ by using syndrome-leader decoding. To that end we need a table containing all syndromes and leaders. Since the total number
of cosets is $2^{3m}=(n+1)^3$, the size of this table, in megabits, is the given in the following Table 1.

\begin{table}[h]
  $$
  \begin{array}{|c||r|}
    \hline
    m & \mbox{size in Mb} \\
    \hline
    \hline 
5 & 1.507\\  \hline  
6 & 21.234\\  \hline  
7 & 310.378\\  \hline  
8 & 4680.843\\  \hline  
9 & 72209.138\\  \hline  
10& 1130650.141\\ \hline
  \end{array}
  $$
  \caption{\label{tab:efficiencyBCH(3)}
  Size of a syndrome-leader table for the code $\BCH_m(3)$.}
\end{table}

It is clear that we need other decoding methods even for small values of $m$.
Let us consider the minimum distance decoding map $\dec$
obtained by means of Berlekamp-Massey algorithm. It corrects up to 3
errors. Since the covering radius of this code is $5$, see
\cite{Helleseth}, error patterns of weight $\le 5$ can occur. Let
$A_j$ denote the number of cosets having leaders of weight $j$. Since
every vector of weight $\le 3$ is a coset leader and there are
$2^{3m}=(n+1)^3$ cosets, then the number of cosets whose vectors
cannot be decoded by $\dec$ is
$$
A_4+A_5=(n+1)^3-\sum_{j=0}^3 
\left( \begin{array}{c} n \\ j \end{array} \right) 
=\frac{n(n+1)(5n+13)}{6}.
$$
Since the leading coefficient of the numerator is $5$, asymptotically
$5/6$ of error patterns cannot be decoded. Consequently, the
stegoscheme based on $\C_m$ and $\dec$,
successfully embeds a message into a random cover vector with
probability close to $1/6$.  Let us note, however, that it is known a
computationally efficient decoding algorithm for $\C$, see
\cite{triple}. In fact BCH(2) and  BCH(3) are the only codes  in the
family of BCH codes for which such a complete decoding is available.
\end{example}

Given a $(n,r)$ stegoscheme $\S=(\Emb,\Ext)$, the previous considerations allow us to define the {\em
  embedding probability} $p_\S$ of $\S$ as the probability that
$\Emb(\vect{x},\vect{m})$ can be computed over all possible pairs
$(\vect{x},\vect{m}) \in \A^n\times\A^r$ (considered as
equally likely). We define also the {\em embedding efficiency of $\S$
  relative to the embedding probability} $p_\S$ (resp.  {\em average
  embedding efficiency of $\S$ relative to the embedding probability})
as
$$
 e_{rel} = e . p_\S \mbox{ and }  \tilde{e}_{rel} = \tilde{e} .
 p_\S.
$$

\begin{example} (Example \ref{exBCH} continued).
Let $\S$ be the stegoscheme obtained from $\C=\BCH_m(3)$. The
parameters of $\S$ can be obtained from the parameters of $\C$. In
particular, $n=2^m-1$, $r=3m$ (the redundancy of $\C$) and $\bound=5$
(the covering radius of $\C$). In the same way, we can obtain the
average change rate $\tilde{\bound}$ of $\S$ as the average radius of
$\C$,
$$
\tilde{\bound}=2^{-3m}\sum_{j=0}^5 j A_j.
$$
Since the values of $A_4$ and $A_5$ are not known in general, we can not compute
exactly this number. However, we have the following bounds
\cite{triple}
$$
\frac{5n(5n+13)}{6} \le A_4 \le \frac{n(5n^2+10n-3)}{6},
$$
$$
\frac{4n(n+2)}{3} \le A_5 \le \frac{n(n-4)(5n+13)}{6}.
$$
The embedding probability can be deduced from our
computations in Example \ref{exBCH}, as
$$
p_\S=2^{-3m} \sum_{j=0}^3 
\left( \begin{array}{c} n \\ j \end{array} \right) 
\approx \frac{1}{6}.
$$
The following Table 2 collects the efficiency and relative efficiency of
these stegoschemes for some small values of $m$.

\begin{table}[h]
  $$
  \begin{array}{|c||c|c|c||c|c||c|c|c|}
    \hline
    m & n &  r & \tilde{\bound} & e & \tilde{e} & p_\S &  e_{rel}
    \rule{0mm}{4mm} & \tilde{e}_{rel}\\
    \hline
    \hline
    4 & 15  & 10 & 3.33 & 2   & 3.003 & 0.141 & 0.281 & 0.422\\ \hline
    5 & 31  & 15 & 4.28 & 3   & 3.504 & 0.152 & 0.457 & 0.532\\ \hline
    6 & 63  & 18 & 4.06 & 3.6 & 4.433 & 0.159 & 0.573 & 0.704\\ \hline
    7 & 127 & 21 & 3.85 & 4.2 & 5.454 & 0.162 & 0.684 & 0.883\\ \hline
    8 & 255 & 24 & 3.84 & 4.8 & 6.250 & 0.163 & 0.791 & 0.891\\ \hline
    9 & 511 & 27 & 3.84 & 5.4 & 7.031 & 0.166 & 0.896 & 1.167\\ \hline
    10 & 1023 & 30 & 3.84 & 6 & 7.812 & 0.166 & 0.996 & 1.296\\ \hline
  \end{array}
  $$
  \caption{\label{tab:efficiencyBCH(3)}
    Embedding efficiency of stegoschemes based on $\BCH_m(3)$ codes.}
\end{table}
\end{example}

\section{A method to ensure embedding success}

As we highlighted in the previous section, it is important to have a
stegoscheme with an embedding probability as close to one as
possible. In this section, we propose to modify the stegoscheme
arising from a given code in order to ensure success (or at least to
improve its probability).

\subsection{Puncturing codes}

Let $\C$ be a $[n,n-r,d]$ code of covering radius $\rho$. Let $\dec$
be a decoding map of $\C$, which can decode up to $t$ errors. The main
idea of our method is to imitate, in a sense, the behavior of perfect codes. 
To that end, we will puncture $\C$ as many times as necessary to
obtain a code $\C'$ with covering radius $\rho'=t$. Then we can use
$\dec$ to deduce a decoding map for $\C'$, as explained in the next
paragraph.  Let us remember that given a code $\C$, {\em puncturing}
$\C$ at a position $i$ is to delete the $i$-th coordinate in each
codeword. If $G$ is a generator matrix for $\C$, then a generator
matrix for the punctured code is obtained from $G$ by deleting column
$i$ (and dependent rows if necessary), see
\cite[Sect. 1.5.1]{book:HuffmanPless}.

\subsection{Decoding punctured codes}

Let $\C$ be a binary linear code and $\C'$ be the code obtained
from $\C$ by puncturing at a set of positions $\puncSet \subset
\brace{1,\dots, n}$. Let $\overline{\puncSet}= \brace{1,\dots,
  n}\setminus\puncSet$,  $\pi_{\overline{\puncSet}}$ be the projection
on the coordinates of $\overline{\puncSet}$ and $\dec$ a decoding map
for $\C$. The following algorithm provides a decoding map for
$\C'$.\\

\begin{algorithm2e}
  \SetKwInOut{Input}{Input}\SetKwInOut{Output}{Output}

  \Input{The punctured set $\puncSet$, the received vector $\vect{y}'\in
    \F_q^{n-|\puncSet|}$ and the decoding map $\dec$ of $\C$.}
  \Output{The list of all codewords $\vect{c}'$ of $\C'$ such that
    $d(\vect{c}',\vect{y}')\le t$.}

\BlankLine
\BlankLine
  
  $\mathcal{L} \gets \emptyset$;\\
  \hfill// for $q^{|\puncSet|}$ elements\\
  \ForEach{$\vect{y}\in
    \F_q^n$ such that
    $\pi_{\overline{\puncSet}}(\vect{y}) =\vect{y}'$}
  {
    $\mathcal{L} \gets \mathcal{L} \cup \dec(\vect{y})$;
  }
  \Return{$\brace{ \pi_{\overline{\puncSet}}(\vect{c}) : \vect{c} \in
      \mathcal{L}}$}
  \caption{\label{algo:decCPunct}
    Decoding algorithm for the punctured code $\C'$.}
\end{algorithm2e}

\begin{proposition}
  \label{prop:correctnessComplexity}
  The algorithm~\ref{algo:decCPunct} is correct and runs in
  $\O(q^{|\puncSet|} c_{\dec})$, where $c_{\dec}$ is the complexity of
  $\dec$ the decoding map of $\C$.
\end{proposition}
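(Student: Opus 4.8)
The plan is to establish the two assertions of the proposition—correctness of the output and the running-time bound—separately, with correctness being the substantive part. Throughout I would read the hypothesis ``\dec\ can decode up to $t$ errors'' as the statement that $\dec$ is a bounded-distance decoder of radius $t$: it returns $\dec(\vect{y})=\vect{c}$ exactly when $\vect{c}\in\C$ and $d(\vect{y},\vect{c})\le t$ (a unique such $\vect{c}$, since $t$ is at most the packing radius of $\C$), and reports failure otherwise.

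For the running time I would simply account for the loop. The \textbf{foreach} loop iterates once over each of the $q^{|\puncSet|}$ vectors $\vect{y}\in\F_q^n$ with $\pi_{\overline{\puncSet}}(\vect{y})=\vect{y}'$, performing one call to $\dec$ (cost $c_{\dec}$) and one set insertion per iteration; since at most one codeword is added each time, $|\mathcal{L}|\le q^{|\puncSet|}$, so the final projection step and all bookkeeping are dominated by the decoding calls, giving $\O(q^{|\puncSet|}c_{\dec})$.

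For correctness the key tool I would set up first is the additive splitting of the Hamming distance along the partition $\brace{1,\dots,n}=\puncSet\cup\overline{\puncSet}$: writing $\pi_{\puncSet}$ for the projection onto the punctured coordinates, one has, for all $\vect{u},\vect{w}\in\F_q^n$,
$$
d(\vect{u},\vect{w}) = d\bigl(\pi_{\overline{\puncSet}}(\vect{u}),\pi_{\overline{\puncSet}}(\vect{w})\bigr) + d\bigl(\pi_{\puncSet}(\vect{u}),\pi_{\puncSet}(\vect{w})\bigr).
$$
\emph{Soundness} (everything returned is a valid output) then follows directly: if $\vect{c}'$ is returned, then $\vect{c}'=\pi_{\overline{\puncSet}}(\vect{c})$ for some $\vect{c}=\dec(\vect{y})\in\mathcal{L}$ with $\pi_{\overline{\puncSet}}(\vect{y})=\vect{y}'$; the bounded-distance property gives $d(\vect{y},\vect{c})\le t$, and discarding the $\puncSet$-summand yields $d(\vect{c}',\vect{y}')=d(\pi_{\overline{\puncSet}}(\vect{c}),\pi_{\overline{\puncSet}}(\vect{y}))\le d(\vect{c},\vect{y})\le t$, while $\vect{c}'\in\C'$ by the definition of puncturing.

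\emph{Completeness} (every valid $\vect{c}'$ is found) is the step I expect to require the one genuinely clever move. Given $\vect{c}'\in\C'$ with $d(\vect{c}',\vect{y}')\le t$, I would first pick any $\vect{c}\in\C$ with $\pi_{\overline{\puncSet}}(\vect{c})=\vect{c}'$ (one exists since $\C'=\pi_{\overline{\puncSet}}(\C)$), and then choose the specific lift $\vect{y}$ of $\vect{y}'$ that agrees with $\vect{c}$ on the punctured positions, i.e. $\pi_{\puncSet}(\vect{y})=\pi_{\puncSet}(\vect{c})$. This $\vect{y}$ is among the $q^{|\puncSet|}$ vectors enumerated by the loop, and the distance splitting now gives $d(\vect{c},\vect{y})=d(\vect{c}',\vect{y}')+0\le t$, forcing $\dec(\vect{y})=\vect{c}$; hence $\vect{c}\in\mathcal{L}$ and $\vect{c}'=\pi_{\overline{\puncSet}}(\vect{c})$ appears in the output. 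The subtlety worth flagging is precisely that the free coordinates must be filled to match the target codeword $\vect{c}$ rather than left equal to $\vect{y}'$—it is this alignment, together with the unique-decoding hypothesis on $t$, that makes $\dec$ return the preimage we need.
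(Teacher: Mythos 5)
Your proof is correct and takes essentially the same route as the paper: the heart of both arguments is the completeness step, namely choosing the lift $\vect{y}$ of $\vect{y}'$ that agrees with the target codeword $\vect{c}$ on the punctured positions, so that $d(\vect{y},\vect{c})=d(\vect{y}',\vect{c}')\le t$ and $\dec$ is forced to return $\vect{c}$. You are in fact somewhat more thorough than the paper, whose proof consists only of this containment argument and dismisses the complexity claim as obvious, whereas you also make the distance-splitting identity explicit and verify soundness (that every returned word lies in $\C'$ within distance $t$ of $\vect{y}'$).
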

\begin{proof}
  Let $\vect{y}' \in \F_q^{n-|\puncSet|}$ and $\vect{c}'\in \C'$ be a codeword
  such that $d(\vect{y}',\vect{c}') \le t$. Then there exist $\vect{c}
  \in \C$ such that
  $\pi_{\overline{\puncSet}}(\vect{c})=\vect{c}'$. Let us to note that
  $\vect{y}\in \F_q^n$ such that $\pi_{\puncSet}(\vect{y}) =
  \pi_{\puncSet}(\vect{c})$ and $\pi_{\overline{\puncSet}}(\vect{y})
  =\vect{y}'$. Then $\vect{c}'$ is in the returned list.
  The statement of the complexity is obvious. 
\end{proof}

If $\dec$ corrects up to $t$ errors, the previous algorithm provides a
decoding map for $\C'$ correcting $t$ errors as well: as the returned
list is nonempty, simply take one of the vectors closest to
$\vect{y}'$.

Note that the statement about complexity in Proposition
\ref{prop:correctnessComplexity} has been computed considering the
worst case. In following, we propose a little improvement for the
average case.\\

Let $\vect{y}' \in \F_q^{n-|\puncSet|}$, $\vect{y}' \not \in \C'$, be
the vector to be decoded and let $\vect{c}' \in \C'$ be the closest
codeword of $\vect{y}'$. Then $d(\vect{c}', \vect{y}') = d(\C',
\vect{y}')$. Let $\vect{e}' \in \F_q^{n-|\puncSet|}$ be such that
$\vect{y}' = \vect{c}' + \vect{e}'$, hence $\mbox{wt}(\vect{e}') \le
\R' = t$. There exists a ``prefix'' vector  $\vect{p} \in
\F_q^{|\puncSet|}$ such that $\vect{c} = (\vect{p},\vect{c}') \in \C$,
with $\pi_{\puncSet}(\vect{c}) = \vect{p}$ and
$\pi_{\overline{\puncSet}}(\vect{c}) = \vect{c}'$. Finally, let
$\vect{y}, \vect{e} \in \F_q^n$ such that $\vect{y} = (\vect{p},
\vect{y}')= \vect{c} + \vect{e}$. As stated before, Algorithm 1 works
by considering all possible prefixes. However, since $\vect{e}=(
\vect{0}, \vect{e}')$, and hence $\mbox{wt}(\vect{e}) =
\mbox{wt}(\vect{e}')$, this algorithm can be slightly changed to avoid
unnecessary iterations. In fact, the minimum number of iterations
depends on $\mbox{wt}(\vect{e}')$. It is simple to see that the number
of iterations needed to decode is
$$
\ceil{\frac{q^{|\puncSet|}}{V_q(|\puncSet|, t-\mbox{wt}(\vect{e}'))}}.
$$
Furthermore, all this iterations should be parallelized to speed up the
algorithm. Of course, the value $\mbox{wt}(\vect{e}')$ is not known a
priori, but for each codeword $\vect{c}_i \in \mathcal{L}$, then
distance $d(\vect{c}_i, \vect{y})$ gives an upper bound to the distance
between $\vect{y}$ and the closest codeword $\vect{c}$.

\begin{example} (Continued from previous examples).
Consider the code BCH$_4(3)$ of parameters $[15,5]$. Here $t=3$ and,
as we can see in Table~\ref{tab:PC3BCH}, we get $|\puncSet| = 3$.
  \begin{table}
    \centering
    \begin{tabular}{c|c|c|c}
     (0, 0, 0) & (0, 0, 1) & (0, 1, 1) & (1, 1, 1)\\
      & (0, 1, 0) & (1, 0, 1) & \\
      & (1, 0, 0) & (1, 1, 0) & \\
    \end{tabular}
    \caption{\label{tab:vect3}
    All binary vectors of length 3 sorted by weights.}
  \end{table}

  Let $\C$ be the binary BCH$_4(3)$ code, $\puncSet = \brace{1, 2, 3}$
  be a punctured set and $\C'$ be the punctured code of $\C$ at the
  position given by $\puncSet$. We propose to decode on $\C'$,
  $\vect{y}'=(1, 1, 1, 1, 1, 1, 1, 1, 1, 0, 0, 0)$ the received word. We
  obtain $\vect{c}_1=(1, 0, 0, 1, 1, 0, 1, 0, 1, 1, 1, 1, 0, 0, 0)$ by
  the decoding of $\vect{y}_1$. We deduce that $\vect{c}_1' =
  \pi_{\bar{\puncSet}}(\vect{c}_1) = (1, 1, 0, 1, 0, 1, 1, 1, 1, 0, 0,
  0)$ is a codeword of $\C'$. Since $d(\vect{c}_1', \vect{y'}) = 2$,
  then only
  $$
  \ceil{\frac{2^{3}}{V_q(3, 1)}} = 2
  $$
  decoding calls is needed. Let $\vect{y}_2 = (1, 1, 1, 1, 1, 1, 1, 1,
  1, 1, 1, 1, 0, 0, 0)$ be the second, and last, virtual received word
  of $\C$. Note that the prefix used in this case, $(1, 1, 1)$, is the
  farthest from the previous one, that is $(0, 0, 0)$. The decoding on
  $\C$ gives us $\vect{c}_2 = (1, 1, 1, 1, 1, 1, 1, 1, 1, 1, 1, 1, 1,
  1, 1)$, which allows us to compute $\vect{c}_2' =
  \pi_{\bar{\puncSet}}(\vect{c}_2) = (1, 1, 1, 1, 1, 1, 1, 1, 1, 1, 1,
  1)$. Since $d(\vect{c}_2', \vect{y}') = 3 > d(\vect{c}_1',
  \vect{y}')$, we deduce that the closest codeword is $\vect{c}_1'$.\\
  Another way to see our trick on this example, is after the
  computation of $\vect{c}_1$, we observe that $d(\vect{c}_1',
  \vect{y}') = 2$, then the closest codeword from $\vect{y}'$ is at
  most at distance 2. Since the decoding algorithm on $\C$ can correct
  3 errors, there is an extra error for the prefix set, see
  Table~\ref{tab:vect3}. Thus the decoding with prefix $(0, 0, 0)$
  manage the prefix vectors of weight 0 and 1. Moreover, the decoding
  with prefix $(1, 1, 1)$ manage the prefix vectors of weight 2 and 3,
  so all prefix vectors are managed only by these 2 decodings.\\
  With this trick, we need only 2 decoding processes whereas our
  previous algorithm needs 8.
\end{example}

\subsection{On the number of punctured positions}

Since the decoding complexity of the punctured code $\C'$ 
is exponential on the number of punctured positions, we
want to minimize this quantity.  To that end  we introduce some new
notation. For a given integer $j\le \rho$,  let 
\begin{eqnarray*}
\Y_j & = & \brace{\vect{y} \in \F_q^n : d(\vect{y},\C)\ge \R -j}\\
\E_j & = & \brace{\vect{y} - \vect{c} \in \F_q^n : \vect{y} \in \Y_j, \vect{c} \in C \mbox{ and }
  d(\vect{y},\vect{c}) = d(\vect{y},\C)}\\
\puncSet_j & = & \bigcap_{\vect{e} \in \E_j} \mbox{supp}(\vect{e}).
\end{eqnarray*}

\begin{proposition}
  \label{prop:CRPuncturedCode}
  Let $j$ be a positive integer and let $\C_j'$ be the code obtained by
  puncturing $\C$ at the positions of $\puncSet_j$. The covering radius of
  $\C_j'$ satisfies
  $$
  \R_j' \le \max \brace{\R - j - 1, \R - |\puncSet_j|}.
  $$
\end{proposition}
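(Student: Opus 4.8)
The plan is to bound $\R_j'$ straight from its definition as the maximum over all $\vect{y}' \in \F_q^{n-|\puncSet_j|}$ of $d(\vect{y}', \C_j')$, using a lift-and-project argument. The observation driving everything is that every vector of the punctured space is the projection $\pi_{\overline{\puncSet_j}}(\vect{y})$ of some $\vect{y}\in\F_q^n$, every codeword of $\C_j'=\pi_{\overline{\puncSet_j}}(\C)$ is $\pi_{\overline{\puncSet_j}}(\vect{c})$ for some $\vect{c}\in\C$, and projecting the difference only deletes coordinates lying in $\puncSet_j$. Concretely, for any lift $\vect{y}$ of $\vect{y}'$ and any $\vect{c}\in\C$, writing $\vect{e}=\vect{y}-\vect{c}$ gives $d(\vect{y}', \pi_{\overline{\puncSet_j}}(\vect{c})) = \mbox{wt}(\vect{e}) - |\mbox{supp}(\vect{e})\cap\puncSet_j|$, so the whole problem reduces to controlling how much weight the deletion of the $\puncSet_j$-coordinates removes.

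First I would fix an arbitrary $\vect{y}'$, pick a convenient lift $\vect{y}$ (say zero on $\puncSet_j$), let $\vect{c}\in\C$ be a nearest codeword, and set $\delta = d(\vect{y},\C)\le\R$ and $\vect{e}=\vect{y}-\vect{c}$. Then I split on the size of $\delta$. If $\delta \le \R-j-1$, then since projection never increases weight, $d(\vect{y}',\C_j') \le d(\vect{y}',\pi_{\overline{\puncSet_j}}(\vect{c})) \le \mbox{wt}(\vect{e}) = \delta \le \R-j-1$, which is the first term of the maximum. If instead $\delta \ge \R-j$, then by definition $\vect{y}\in\Y_j$ and $\vect{e}\in\E_j$, whence $\puncSet_j \subseteq \mbox{supp}(\vect{e})$; deleting the $\puncSet_j$-coordinates therefore removes exactly $|\puncSet_j|$ nonzero entries, giving $d(\vect{y}',\pi_{\overline{\puncSet_j}}(\vect{c})) = \delta - |\puncSet_j| \le \R-|\puncSet_j|$, the second term. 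Taking the maximum of the two cases and then the supremum over $\vect{y}'$ yields the stated bound.

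The step that needs the most care is the second case, since the implication $\vect{e}\in\E_j \Rightarrow \puncSet_j\subseteq\mbox{supp}(\vect{e})$ is exactly where the definition of $\puncSet_j$ as the intersection of supports over $\E_j$ is invoked, and I must verify that the $\vect{e}$ produced by my chosen lift genuinely belongs to $\E_j$; this holds because $\vect{c}$ was chosen at minimum distance, so $d(\vect{y},\vect{c})=d(\vect{y},\C)$ and $\vect{y}\in\Y_j$. I would also record that $\pi_{\overline{\puncSet_j}}(\vect{c})$ is a legitimate member of $\C_j'$, so that $d(\vect{y}',\C_j')\le d(\vect{y}',\pi_{\overline{\puncSet_j}}(\vect{c}))$ is a valid upper bound, and note that the argument exhibits, for each $\vect{y}'$, an explicit nearby punctured codeword, so no dependence on the particular lift survives. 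Beyond this bookkeeping on supports, no delicate estimate is required; the content is entirely in setting up the case split against the thresholds $\R-j$ and $\R-j-1$.
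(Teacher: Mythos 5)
Your proof is correct and uses essentially the same argument as the paper's: lift $\vect{y}'$ to $\F_q^n$, observe that a lift far from $\C$ yields an error vector belonging to $\E_j$, whose support must therefore contain $\puncSet_j$, and do the weight accounting between $\mbox{wt}(\vect{e})$, its projection, and the bound $\mbox{wt}(\vect{e})\le\R$. The paper phrases this as a proof by contradiction while you run the direct case split on $\delta = d(\vect{y},\C)$; this is a purely presentational difference (your choice of decomposing the lift via a nearest codeword of $\C$, rather than lifting the nearest punctured codeword and its error $\vect{e}'$, is in fact slightly more careful than the paper's wording, which tacitly assumes that lift lands in $\E_j$).
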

\begin{proof}
  Assume there exists $\vect{y}' \in
  \F_q^{n-|\puncSet_j|}$ such that $d(\vect{y}',\C') > \max  \brace{\R - j - 1, \R -
    |\puncSet_j|}$. Let $(\vect{c}',\vect{e}') \in \C' \times
  \F_q^{n-|\puncSet_j|}$ such that
  \mbox{$\vect{y}' = \vect{c}' + \vect{e}'$} and $\mbox{wt}(\vect{e}') =
  d(\vect{y}',\vect{c}') = d(\vect{y}',\C') > \max
  \brace{\R-j-1, \R - |\puncSet_j|}$. Then there exist $\vect{y} \in \Y_j$
  and $ (\vect{c},\vect{e}) \in \C \times \E_j$ such that
  $\pi_{\overline{\puncSet}_j}(\vect{y}) = \vect{y}'$ and
  $$
  \left\lbrace
    \begin{array}{lcl}
       \vect{y} & = &  \vect{c} +  \vect{e},\\
      \pi_{\overline{\puncSet}_j}( \vect{e}) & = &  \vect{e}'.
    \end{array}
  \right.
  $$
  Since $ \vect{e} \in \E_j$, we have $\puncSet_j \subset \mbox{supp} ( \vect{e})$ and so
    $\mbox{wt}( \vect{e})  =  \mbox{wt}( \vect{e}') + | \puncSet_j|> \R$,
  which is impossible by the definition of covering radius. 
\end{proof}

Therefore,  greater $j$ implies smaller $\R-j-1$ but also greater $\R
-|\puncSet_j|$. For steganographic purposes  we must puncture  to obtain a
code $\C'$ with a covering  radius equal to the correction capacity of
the decoding map of $\C$. Since puncturing $n-1$ times leads to a code
of covering radius 0, this is always possible.\\

\begin{algorithm2e}
  \SetKwInOut{Input}{Input}\SetKwInOut{Output}{Output}

  \Input{A code $\C$,  its covering radius  $\R$ and a positive integer $t$.} 
  \Output{A couple $(\C', \puncSet)$, where $\C'$ is obtained from
    $\C$ by puncturing at the positions of  $\puncSet$ and has
    covering radius $t$.}

\BlankLine

$(\C',\R' )\gets (\C, \R)$;\\
$\puncSet \gets \emptyset$;\\
\While {$t \ne \R'$}{
  $CL \gets$ leaders of cosets of $\C'$ with weight greater than $t$;\\
  $i_{\max} \gets$ the position which occurs the most of time in the
  support of $CL$;\\
  $\puncSet \gets \puncSet \cup \brace{i_{\max}}$;\\
  $\C' \gets \C'$ punctured at the position $i_{\max}$;\\
  $\R' \gets$ covering radius of $\C'$;\\
}

\Return{$\parenthesis{\C', \puncSet}$}
  \caption{\label{algo:PuncSet}
    Algorithm to compute the punctured code $\C'$.}
\end{algorithm2e}

The main drawback of this algorithm is that it
could be very expensive in time and memory complexities, since it
requires the computation of the whole set of the coset leaders. Note
however, that this computation must be done just once for code (as for
any other parameter of $\C$).

\subsection{A stegoscheme based on a punctured code}

Let $\C$ be a code of covering radius $\rho$ and $\dec$ a decoding map
correcting $t$ errors. Assume we know the set  $\puncSet$ of positions
to obtain $\C'$,  the punctured code of covering radius $t$, which can
be computed by using Algorithm~\ref{algo:PuncSet}. Moreover let
$\dec'$ be a decoding map for $\C'$ obtained from
Algorithm~\ref{algo:decCPunct}. Finally, let $H'$ be a parity check
matrix of $\C'$. It can be easily obtained from a generator matrix
$G'$ which, in turn, can be easily obtained from a generator matrix
$G$ of $\C$. We have all the ingredients to define a stegoscheme $\S'$
based on the matrix embedding principle with the punctured code
$\C'$. Let $n'$ and $\kpdual$ be, respectively, the length and
redundancy of $\C'$. The embedding  map of $\S'$ is
$$
\begin{array}{rcrcl}
  \Emb: \F_2^{\kpdual} & \times & \F_q^{n'} & \longrightarrow & \F_q^{n'}\\
  (\vect{m}& , & \vect{v}) & \longmapsto & \vect{y} + \dec'(\vect{v} -
  \vect{y}),\\
\end{array}
$$
where $\vect{y}$ is an element of $ \F_q^{n'}$ such that $
 \vect{y}H'^T = \vect{m}$. Recall that when $\C'$ is in systematic
 form, then one can simply take $ \vect{y}= (\vect{0}, \vect{m})$. 
The extracting map is
$$
\begin{array}{rcl}
  \Ext: \F_q^{n'} & \longrightarrow & \F_2^{\kpdual}\\
  \vect{y} & \longmapsto & \vect{y}H'^T.\\
\end{array}
$$
It is simple to check that
$\Ext\parenthesis{\Emb\parenthesis{\vect{m},\vect{v}}} = \vect{m}$ so
that  couple $(\Emb, \Ext)$ defines a true stegoscheme $\S'$.

\subsection{Tradeoff}

In previous sections we have suggested the use of puncturing up to ensuring embedding success.  However, since  complete decoding is known to be  an NP-hard problem
\cite{DecodingProblemGeneral}, in specific situations  this goal may be too ambitious. 

Indeed, we can simply
use the puncturation principle to increase the embedding probability,
but not necessarily up to one. For example, the sender could have a
target embedding probability $p_{\S}<1$, which is not reachable by the
original stegoscheme. Then we can stop the puncturation process when
the target embedding probability is reached. Analogously, since the
complexity of the decoding algorithm for punctured codes is
exponential in the number of the punctured positions, the
steganographer may also limit this number  to a preset maximum,
according to available computing resources. In short, the proposed
method is totally versatile to be modified according to the
requirements of the sender. In the rest of this article, we puncture
up to have an embedding probability $p_{\S} = 1$.

\section{Parameters of the new stegoschemes}
\label{sec:parameter}

Keeping notations as in the previous section, let $\C$ be a code for
which there exists an efficient algorithm capable of correcting $t$
errors. Let $\C'$ be its punctured of covering radius $t$, and let
$\S,\S'$ be the stegoschemes obtained from $\C$ and $\C'$
respectively. The parameters of  $\S'$ are 
$$
a' = \frac{r'}{n'},\; R' = \frac{t}{n'},\; e'= e'_{rel}= \frac{r'}{t},\;
\tilde{R}'=\frac{\tilde{\bound}'}{n'},\; \tilde{e}'=\tilde{e}' _{rel}=
\frac{r'}{\tilde{\bound}'},
$$
where $\tilde{\bound}'$ is the average covering radius of $\C'$. It is
not easy to obtain general closed formulas for these parameters,
since they depend on the number and location of  punctured positions,
which in turn depend on the code $\C$ and the number $t$.
However, except for perfect codes (for which $t$ is just the covering
radius of $\C$ and we do not need to puncture) we have $p_{\S'}=1$,
$n'<n$, $a\ge a'$ and $T'=t<T$, $\tilde{T}'<\tilde{T}$. The embedding
efficiency of $\S'$ may be larger or smaller than the embedding
efficiency of $\S$, for a relative payload fixed.

\subsection{Numerical experiments}

\begin{table*}
  \scriptsize
  \centering
    \begin{tabular}{|c||c|c|c|c|c|c||c|c|c|c|c|c|} \hline
      $m$ & \multicolumn{6}{c||}{BCH$_m(2)$} & \multicolumn{6}{c|}{punctured BCH$_m(2)$} \\ \cline{2-13}
      \rule{0mm}{4mm} & $n$ & $r$ & $a$ & $\tilde{R}$ & $e$ & $\tilde{e}$ & 
      $n'$ & $r'$ & $a'$ & $\tilde{R}'$ & $e'$ & $\tilde{e}'$ \\ \hline\hline
      
      4  & 15  & 8 & 0.533 & 0.164 & 2.67 & 3.25 & 11    & 4  & 0.363  &
      1.375 & 2   & 2.909 \\ \hline
      5  & 31  & 10 & 0.323 & 0.0801 & 3.33 & 4.03 & 28    & 7  & 0.250  &
      1.766 & 3.5 & 3.965 \\ \hline
      6  & 63  & 12 & 0.190 & 0.0396 & 4 & 4.81 & 59    & 8  & 0.135  &
      1.777 & 4    & 4.501 \\ \hline
      7  & 127 & 14 & 0.110 & 0.0197 & 4.66 & 5.61 & 123  & 10 & 0.081 &
      1.879 &  5   & 5.322 \\ \hline
      8  & 255 & 16 & 0.0627 & 0.0098 & 5.34 & 6.41 & 251  & 12 & 0.047 &
      1.939 &  6    & 6.189 \\ \hline
      9  & 511 & 18 & 0.0352 & 0.00489 & 6 & 7.2 & 507  & 14 & 0.027 & 1.969
      &  7    & 7.110 \\ \hline
      10& 1023 & 20 & 0.196 & 0.00244 & 6.66 & 8 & 1018 & 15 & 0.014 & 1.969
      &  7.5 & 7.617 \\ \hline
      11& 2047 & 22 & 0.0107 & 0.00122 & 7.34 & 8.80 & 2042 & 17 & 0.008 &
      1.984 &  8.5 & 8.566 \\ \hline
    \end{tabular}
  \caption{\label{tab:PC2BCH}
    Parameters of stegoschemes arising from binary  $\BCH_m(2)$ codes and punctured  $\BCH_m(2)$ codes.}
\end{table*}

\begin{table*}
  \scriptsize
  \centering
    \begin{tabular}{|c||c|c|c|c|c|c||c|c|c|c|c|c|} \hline
      $m$ & \multicolumn{6}{c||}{BCH$_m(3)$} & \multicolumn{6}{c|}{punctured BCH$_m(3)$} \\ \cline{2-13}
      \rule{0mm}{4mm} & $n$ & $r$ & $a$ & $\tilde{R}$ & $e$ & $\tilde{e}$ & 
      $n'$ & $r'$ & $a'$ & $\tilde{R}'$ & $e'$ & $\tilde{e}'$ \\
      \hline\hline
      4 & 15 & 10 & 0.667 & 0.222 & 2 & 3 & 12 & 7 & 0.583 & 0.191 & 2.33 &
      3.05\\
      \hline
      5 & 31 & 15 & 0.484 & 0.138 & 3 & 3.5 & 25 & 9 & 0.36 & 0.0985 & 3 &
      3.65 \\
      \hline
      6 & 63 & 18 & 0.286 & 0.0645 & 3.6 & 4.43 & 56 & 11 & 0.197 & 0.0434
      & 3.67 & 4.52\\
      \hline
      7 & 127 & 21 & 0.165 & 0.0303 & 4.2 & 5.45 & 121 & 15 & 0.124 &
      0.0230 & 5 & 5.38\\
      \hline
      8 & 255 & 24 & 0.0941 & 0.0151 & 4.8 & 6.25 & 248 & 17 & 0.0686 &
      0.0112 & 5.66 & 6.11\\ 
      \hline
      9 & 511 & 27 & 0.0529 & 0.00751 & 5.4 & 7.03 & 504 & 20 & 0.0397 &
      0.00572 & 6.66 & 6.94\\ 
      \hline
    \end{tabular}
    \caption{\label{tab:PC3BCH}
      Parameters of stegoschemes arising from binary  $\BCH_m(3)$ codes and punctured  $\BCH_m(3)$ codes.}
  \end{table*}

\begin{figure}
  \centering
  \includegraphics[width=8cm]{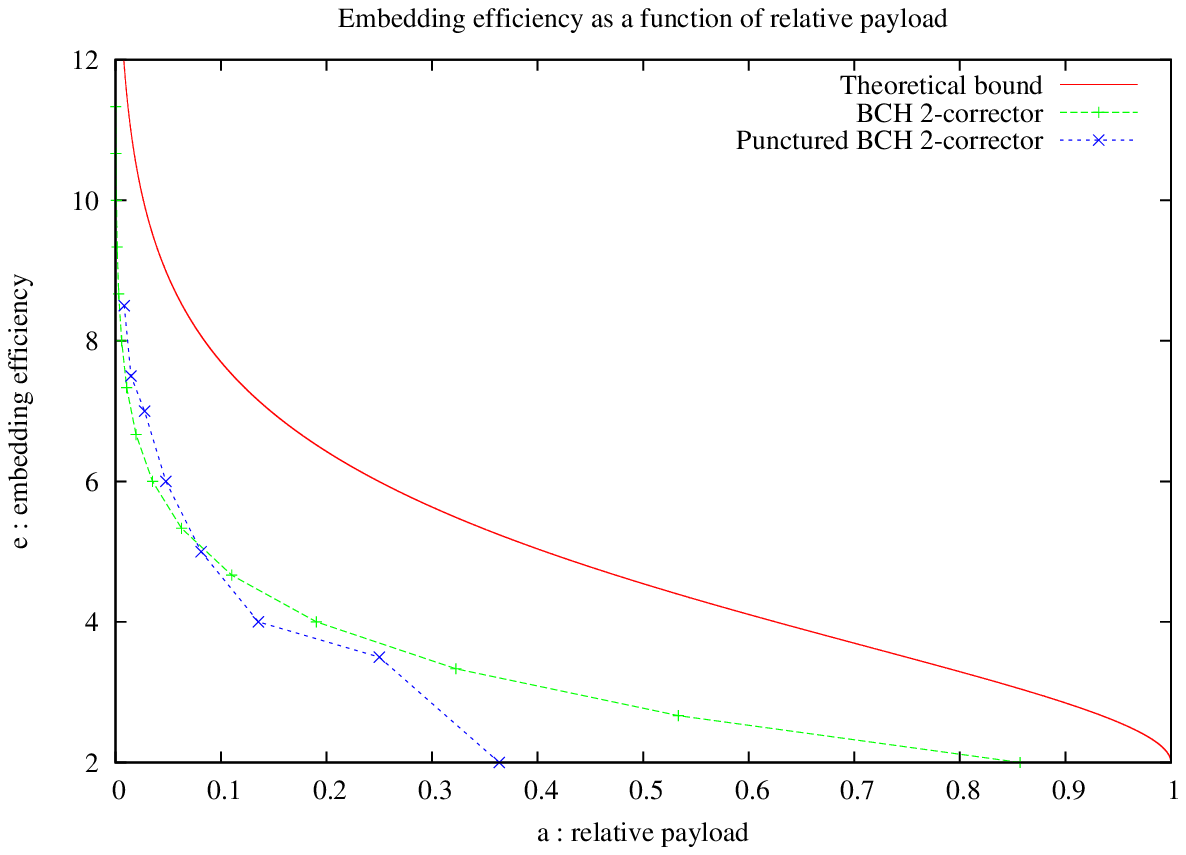}
  \caption{\label{fig:BCH2_VS_PBCH2}
    Comparison between stegoschemes based on the binary  $\BCH_m(2)$ and
    their punctured associated codes.}
\end{figure}

\begin{figure}
  \centering
  \includegraphics[width=8cm]{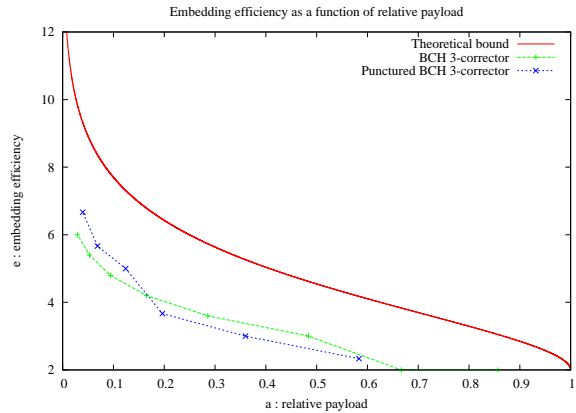}
  \caption{\label{fig:BCH3_VS_PBCH3}
    Comparison between stegoschemes based on the binary  $\BCH_m(3)$ and
    their punctured associated codes.}
\end{figure}

In order to see some concrete results showing the performance of our
method, we list some numerical results obtained for primitive two and
three error correcting binary BCH codes, taking the number $t$ given
by the BCH bound (that is, allowing efficient decoding by means of
Berlekamp-Massey algorithm, as usual). These codes have been studied
by many authors\cite{triple,Helleseth} and proposed as good candidates
for constructing stegoschemes. It is well known that they have
covering radii 3 and 5 respectively, and that $\mbox{BCH}_m(2)$ is
quasi-perfect. By using Algorithm~\ref{algo:PuncSet}, we have
determined  the couple $(\C', \puncSet)$ for some two and three
error-correcting binary BCH codes and subsequently computed the
parameters of the corresponding stegoschemes.  

The obtained results are listed in Tables~\ref{tab:PC2BCH}
(two-error-correcting, $t=2$) and ~\ref{tab:PC3BCH}
(three-error-correcting, $t=3$).  Notations in these tables are same
used throughout this article. In both cases we puncture BCH codes to
get  other codes whose covering radii are equal to the correction
capabilities of BCH codes. Then $R=3, R'=2$ for  two-error-correcting
and $R=5 ,R'=3$ for three-error-correcting BCH codes. We do not
include the probability of embedding success  for punctured codes,
since in all cases such probability is exactly 1. For a better
understanding of these results, and a comparison with stegoschemes
coming from original BCH codes, we  also include a graphical
representation in Figures ~\ref{fig:BCH2_VS_PBCH2} and
~\ref{fig:BCH3_VS_PBCH3}.

As we see, the new stegoschemes may have a better performance in terms
of embedding efficiency. For example,  in Table~\ref{tab:PC2BCH}, we
can see how the stegosystem based on $\BCH_9(2)$ has the same
embedding efficiency as the stegoscheme based on $\BCH_8(2)'$, but
with a smaller relative payload. Thus we conclude that the
stegoschemes based on punctured codes may be better than the original
ones {\em in the worst case}, although they do not appear to be better
in terms of {\em average} embedding efficiency.

All computations have been performed by using the system MAGMA
\cite{magma}, where  BCH codes are managed in systematic form. We
remark that all punctured positions have resulted to be among the
first $n-r$ (systematic) positions. Given our limited computer
resources,  we have not used Algorithm~\ref{algo:PuncSet} in full to
obtain the puncturing of  $\BCH_8(3)$ and $\BCH_9(3)$. Instead  we
have punctured these codes just at the first positions, up to obtain a
code with covering radius $R'$ equal to $t=3$ (and then true results
may be somewhat better than those shown in Table~\ref{tab:PC3BCH}).

\section{Conclusion}

We have proposed a method to ensure or increase the probability of
embedding success for stegoschemes arising from error correcting
codes. This method is based on puncturing codes.  As we have seen, the
use of these punctured codes can also increase the embedding
efficiency of the obtained stegoschemes.

\newcommand{\etalchar}[1]{$^{#1}$}


\begin{thebibliography}{CMB{\etalchar{+}}08}

\bibitem[BCP97]{magma}
Wieb Bosma, John Cannon, and Catherine Playoust.
\newblock The {M}agma algebra system. {I}. {T}he user language.
\newblock {\em J. Symbolic Comput.}, 24(3-4):235--265, 1997.
\newblock Computational algebra and number theory (London, 1993).

\bibitem[BMVT78]{DecodingProblemGeneral}
Elwyn Berlekamp, Robert McEliece, and Henk Van~Tilborg.
\newblock On the inherent intractability of certain coding problems.
\newblock {\em IEEE Trans. on Information Theory}, IT-24(3):384--386, May 1978.

\bibitem[CMB{\etalchar{+}}08]{book:WatermarkingSteganography}
Ingemar Cox, Matthew Miller, Jeffrey Bloom, Jessica Fridrich, and Ton Kalker.
\newblock {\em Digital Watermarking and Steganography}.
\newblock Morgan Kaufmann Publishers Inc., San Francisco, CA, USA, 2 edition,
  2008.

\bibitem[Cra98]{Crandall}
Ron Crandall.
\newblock Some notes on {S}teganography, 1998.
\newblock Posted on the steganography mailing list.

\bibitem[FG09]{FontaineGaland}
Caroline Fontaine and Fabien Galand.
\newblock How {R}eed-{S}olomon codes can improve steganographic schemes.
\newblock {\em EURASIP J. Inf. Secur.}, 2009:1--10, 2009.

\bibitem[FJF10]{MinimizingImpact}
Tomas Filler, Jan Judas, and Jessica Fridrich.
\newblock Minimizing embedding impact in steganography using trellis-coded
  quantization.
\newblock {\em Media Forensics and Security II}, 7541(1):754105, 2010.

\bibitem[Har71]{Hartmann}
Carlos Hartmann.
\newblock A note on the decoding of double-error-correcting binary {BCH} codes
  of primitive length.
\newblock {\em IEEE Trans. on Information Theory}, 17(6):765 -- 766, November
  1971.

\bibitem[Hel78]{Helleseth}
Tor Helleseth.
\newblock All binary 3-error-correcting {BCH} codes of length $2^m-1$ have
  covering radius 5.
\newblock {\em IEEE Trans. on Information Theory}, IT-24:257--258, March 1978.

\bibitem[HP03]{book:HuffmanPless}
Cary Huffman and Vera Pless.
\newblock {\em Fundamentals of error-correcting codes}.
\newblock Cambridge University Press, Cambridge, 2003.

\bibitem[MB11]{MunueraBarbier_Systematic}
Carlos Munuera and Morgan Barbier.
\newblock Wet paper codes and the dual distance in steganography.
\newblock {\em Advances in Mathematics of Communications}, 2011.
\newblock To be published.

\bibitem[SW06]{syndromeCodingBCH}
Dagmar Sch\"{o}nfeld and Antje Winkler.
\newblock Embedding with syndrome coding based on {BCH} codes.
\newblock In {\em Proceedings of the 8th workshop on Multimedia and security},
  MM\&Sec '06, pages 214--223, New York, NY, USA, 2006. ACM.

\bibitem[VDHB76]{triple}
Jose~Antonio Van Der~Horst and Tony Berger.
\newblock Complete decoding of triple-error-correcting binary {BCH} codes.
\newblock {\em IEEE Trans. on Information Theory}, pages 138--147, 1976.

\bibitem[Wes01]{F5_algo}
Andreas Westfeld.
\newblock F5 - {A} steganographic algorithm.
\newblock In Ira Moskowitz, editor, {\em Information Hiding}, volume 2137 of
  {\em Lecture Notes in Computer Science}, pages 289--302. Springer Berlin /
  Heidelberg, 2001.

\bibitem[ZSK09]{BCH_fastSyndrome}
Rongyue Zhang, Vasiliy Sachnev, and Hyoung Kim.
\newblock Fast {BCH} syndrome coding for steganography.
\newblock In Stefan Katzenbeisser and Ahmad-Reza Sadeghi, editors, {\em
  Information Hiding}, volume 5806 of {\em Lecture Notes in Computer Science},
  pages 48--58. Springer Berlin / Heidelberg, 2009.

\end{thebibliography}

\end{document}